%
\documentclass[10pt]{article}
\usepackage[T1]{fontenc}
\usepackage[ansinew]{inputenc}
\usepackage[english]{babel}
\usepackage{graphicx}
\usepackage{booktabs}
\usepackage{amsthm}
\usepackage{amsmath}
\usepackage{bbm}
\usepackage{mathtools}
\usepackage{color}
\usepackage{float}

\newcommand{\N}{\mathbbm{N}}

\newcommand{\R}{\mathbbm{R}}
\newcommand{\devec}{\mathbf{R}}
\renewcommand{\P}{\mathbbm{P}}
\newcommand{\T}{\mathbf{T}}
\newcommand{\E}{\mathbf{E}}

\newcommand{\trace}{\mathrm{trace}}
\newcommand{\rmatrix}[1]{\begin{bmatrix}#1\end{bmatrix}}

\newtheorem{example}{Example}
\newtheorem{corollary}{Corollary}
\newtheorem{problem}{Problem}
\newtheorem{remark}{Remark}
\newtheorem{proposition}{Proposition}
\newtheorem{theorem}{Theorem}

\begin{document}
\title{\LARGE \bf Kernel-based system identification\\ from noisy and
  incomplete input-output data}
\author{Riccardo S. Risuleo, Giulio Bottegal and H\r akan Hjalmarsson
  \thanks{R. S. Risuleo, G. Bottegal and H. Hjalmarsson are with the ACCESS
    Linnaeus Center, School of Electrical Engineering, KTH Royal Institute of
    Technology, Sweden ({\tt \small risuleo;bottegal;hjalmars@kth.se}).  This
    work was supported by the European Research Council under the advanced
    grant LEARN, contract 267381 and by the Swedish Research Council under
    contract 621-2009-4017.  }} 
\maketitle

\begin{abstract}
  In this contribution, we propose a kernel-based method for the identification
  of linear systems from noisy and incomplete input-output datasets. We model
  the impulse response of the system as a Gaussian process whose covariance
  matrix is given by the recently introduced stable spline kernel. We adopt an
  empirical Bayes approach to estimate the posterior distribution of the impulse
  response given the data.  The noiseless and missing data samples, together
  with the kernel hyperparameters, are estimated maximizing the joint marginal
  likelihood of the input and output measurements. To compute the
  marginal-likelihood maximizer, we build a solution scheme based on the
  Expectation-Maximization method. Simulations on a benchmark dataset show the
  effectiveness of the method.
\end{abstract}

\section{Introduction}
Common formulations of system identification problems postulate the perfect
knowledge of the input signal feeding the unknown system~\cite{ljung1999system}.
In many applications however, the input signal is available only in a noisy
version, giving raise to a setup usually referred to as an
\emph{errors-in-variables} (EIV) model~\cite{soederstroem2007errors}.  Static
EIV models have been subject of extensive studies in the statistical literature
since the beginning of the last century~\cite{frisch1934statistical}; later the
system identification community has become interested in dynamical EIV
models~\cite{anderson1985identification,fernando1985identification,soederstroem2007errors}.

Identification of EIV systems is a challenging task; even in the linear case,
standard least-squares yields biased estimates, due to the presence of noise in
the regressors. Therefore, lots of efforts have been devoted to the development
of ad-hoc methods for EIV systems. Bias eliminating least-squares (BELS) have
been introduced in~\cite{zheng1989unbiased} to correct the bias of standard
least-squares. Another method for the identification of EIV systems has been
obtained by generalizing the so-called Frisch scheme, originally developed for
static EIV models~\cite{frisch1934statistical}, to dynamic models.
Interestingly, the dynamic Frisch scheme provides a unique identified model
under mild conditions~\cite{beghelli1990frisch}. This is in contrast with the
static case, where in general many models are compatible with the observed data.
The accuracy of the Frisch scheme for dynamic system identification has been
extensively studied in the literature~\cite{soederstroem2002perspectives}; the
method has been recently extended to more general noise
setups~\cite{fan2010frisch,ning2015linear,zhang2015errors}. Other EIV
identification methods rely upon maximum-likelihood criteria. Both
time-domain~\cite{soederstroem1981identification,diversi2007maximum} and
frequency-domain~\cite{schoukens1997frequency} approaches have been developed in
the past; for a survey and a comparison of the maximum likelihood methods
see~\cite{soederstroem2010accuracy}.

In this paper, we consider a more general dynamic EIV setup. Specifically, we
assume  that some of the samples may be missing, for instance due to lossy
transmission channels or sensor malfunction. Therefore, our task is to jointly
identify the system and reconstruct the missing input-output values. Some
techniques to deal with this problem have been proposed in the past, both in
time and frequency
domains~\cite{isaksson1993Identification,pintelon2000frequency,wallin2014maximum,markovsky2015identification,zhang2013errors}.
Recently, regularization techniques based on nuclear norm have been proposed for
system identification with missing
data~\cite{liu2013nuclear,markovsky2013structured}.

The method described in this paper to deal with EIV models with missing data
relies on a regularized kernel-based approach. Interpreting kernel-based
regularization as a  Gaussian regression problem~\cite{rasmussen2006gaussian},
we model the unknown impulse response of the system as a zero-mean Gaussian
random vector. The covariance matrix is given by the recently introduced stable
spline kernel~\cite{pillonetto2010new,pillonetto2014kernel}, which penalizes
non-exponentially stable systems. According to the \emph{empirical Bayes} (EB)
paradigm~\cite{maritz1989empirical}, we obtain an impulse response estimator as
a function of the noiseless input and the kernel hyperparameters. These
quantities are estimated maximizing the joint \emph{marginal likelihood}
(marginal likelihood) of the noisy inputs and outputs. We devise an iterative
algorithm to solve the marginal likelihood maximization problem, based on the EM
method. We briefly address the problem of identifiability. We test the proposed
approach with numerical simulations.

In this paper we consider the case of both missing input and missing output
samples, as well as noisy data; as compared to~\cite{pillonetto2009bayesian}
where a kernel-based approach is adopted for the case of noiseless input and
missing output samples.

The paper is organized as follows. In Section~\ref{sec:formulation} we
formulate the problem the identification of EIV models with missing data. In
Section~\ref{sec:sysid} we show how to estimate the system impulse response. In
Section~\ref{sec:ml} we solve the marginal likelihood problem that yields the
missing samples and the kernel hyperparameters. In
Section~\ref{sec:identifiability} we discuss some pitfalls in the model. In
Section~\ref{sec:simulations} we validate our method on a benchmark dataset. In
Section~\ref{sec:conclusions} we discuss our results and conclude the paper.

\subsection{Notation}
We denote by ``$\{a_k\}$'' a sequence of scalars $a_k$ indexed by $k$;
``${\{a_k\}}_{k=a}^{b}$''
is the set of $a_k$ with $k$ ranging from $a$ to $b$. Given
${\{a_k\}}_{k=a}^{b}$,
``$a$'' indicates the column vector of the stacked scalars and
``$a_i$'' indicates the $i$th element of said vector. If $a\in R^{N}$ is a column
vector, ``$\T_{m\times n}(\cdot)$'' indicates the Toeplitz operator that associates
to the vector $a$ the $m\times n$ matrix $A$ such that
\begin{equation}
  {\big[A\big]}_{i,j} =
    \begin{cases}
    a_{i-j+1}, & i\geq j,\, i-j+1\leq N,\\
    0, & \text{otherwise}.
\end{cases}
\end{equation}
The symbol ``$\delta_{i,j}$'' denotes the Kronecker delta and ``$\otimes$'' is the
standard Kronecker product between matrices. The symbol ``$\cong$'' indicates
equality up to an additive constant.

\section{Problem formulation}\label{sec:formulation}
We consider the problem of identifying a dynamic system
from noisy samples of input and output. Figure~\ref{fig:block_scheme} shows a
schematic representation of the setup under study.
\begin{figure}[htb]
  \centering
  \includegraphics[width=0.5\columnwidth]{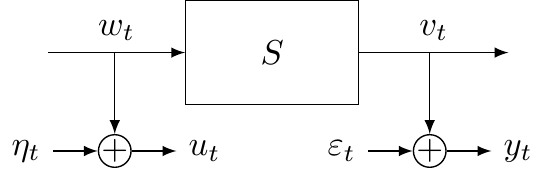}
  \caption{Block scheme of the system setup.}\label{fig:block_scheme}
\end{figure}
The system is strictly causal, asymptotically stable, and linear
time-invariant. The input-output relation can be represented as
\begin{equation}\label{eq:convolution}
  v_t = \sum_{k =1}^\infty g_k  w_{t-k},
\end{equation}
where the $g_k$ is the (unknown) impulse response
of the system. The objective is to reconstruct
the samples of the impulse response from $N$ samples of
the input $u_t$ and output $y_t$. These samples
are measurements of the true system input $w_t$ and output $v_t$, corrupted by
sensor noises
\begin{equation}\label{eq:measurements}
  \begin{split}
    u_t &= w_t + \eta_t,\\
    y_t &= v_t + \varepsilon_t.
  \end{split}
\end{equation}
The noise sequences $\eta_t$ and $\varepsilon_t$ are assumed mutually
independent, Gaussian and white, with unknown variance $\sigma_u^2$ and
$\sigma_y^2$, respectively. The ratio $\gamma=\sigma^2_y/\sigma^2_u$ is assumed
known, in order to guarantee identifiability (see
e.g.~\cite{fernando1985identification}). We suppose that the system is at rest
prior to the collection of the measurements, that is $w_k=0$, $v_k=0$, for all
$k<1$.

We assume also that some of the samples are not available; see the following
example.

\begin{example}
  We have run a system for $N=9$ time instants collecting the following measurements:
\begin{equation*}
  \renewcommand*{\arraystretch}{1.4}
  \begin{matrix}
  \big[ u_1 & \times & u_3 & u_4 & \times & \times & u_7 & \times & u_9\big],\\
  \big[ y_1 & y_2 & \times & y_4 & y_5 & y_6 & \times & \times & y_9\big].
  \end{matrix}
\end{equation*}
\end{example}

Define the set of natural numbers ${\{t^u_k\}}_{k=1}^{N_u}$ such that
$1\leq t^{\,u}_i < t^{\,u}_{i+1}\leq N$, and $u_{t^{u}_i}$ is an
available measurement. In a similar fashion define ${\{t^y_k\}}_{k=1}^{N_y}$.
These sets indicate the $N_u$ and $N_y$ time instants at
which we have available sensor measurements of the input and output
respectively. We
define the available measurement vectors $\tilde u\in \R^{N_u}$ and
$\tilde y\in \R^{N_y}$ such that
\begin{equation}
  \begin{split}
    \tilde u_k &= u_{t^u_k}, \qquad k=1,\dots, N_u,\\
    \tilde y_k &= y_{t^y_k}, \qquad k=1,\dots, N_y.
  \end{split}
\end{equation}
Furthermore, define the operators $\P_u$ and $\P_y$ as the $N_u
\times N$ respectively $N_y\times N$ matrices defined by
\begin{equation}
  {\big[\P_u\big]}_{i,j} = \delta_{i,t^u_j}\,,\quad {\big[\P_y\big]}_{i,j} = \delta_{i,t^{y}_j}\,.
\end{equation}
By construction, these matrices are \emph{right semi-orthogonal}:
\begin{equation}\label{eq:semi-orthogonal}
  \P_u\P_u^T = I_{N_u},\quad
  \P_y\P_y^T = I_{N_y};
\end{equation}
they have full row rank and they represent the mappings betweev the complete
data and the available data:
\begin{equation}
  \tilde u = \P_u u\,, \qquad \tilde y = \P_y y,
\end{equation}
where $u$ and $y$ are vectors of all the stacked values of all (available and
  not) measurements of input and output.
\setcounter{example}{0}
\begin{example}[continued]
  The times of available input measurements are:
  \begin{equation}
    \{t^{u}_k\} = \{1,\,3,\,4,\,7,\,9\}, \quad N_u = 5,
  \end{equation}
  and the $5\times9$ matrix $\P_u$ is
  \begin{equation}
    \P_u =
    \rmatrix{
      1 & 0 & 0 & 0 & 0 & 0 & 0 & 0 & 0\\
      0 & 0 & 1 & 0 & 0 & 0 & 0 & 0 & 0\\
      0 & 0 & 0 & 1 & 0 & 0 & 0 & 0 & 0\\
      0 & 0 & 0 & 0 & 0 & 0 & 1 & 0 & 0\\
      0 & 0 & 0 & 0 & 0 & 0 & 0 & 0 & 1
    }
  \end{equation}
\end{example}
\vspace{0.5em}

We can now formally define the problems of interest in this paper.
\begin{problem}\label{problem:1}[\textbf{System Identification}]
Given $N_u$ ordered samples of the input process $u_t$, collected at times
$t^u_k$, and $N_y$ ordered samples of the output process $y_t$, collected at
times $t^y_k$, estimate the first $n$ samples of the system impulse response
${\{g_k\}}_{k=1}^{n}$.
\end{problem}
We also consider the problem of reconstructing the missing samples:
\begin{problem}\label{problem:2}[\textbf{Input smoothing}]
Given $N_u$ ordered samples of the input process $u_t$, collected at times
$t^u_k$, and $N_y$ ordered samples of the output process $y_t$, collected at
times $t^y_k$, estimate the sample $w_k$, $k\in \N,\, k\leq
N$.
\end{problem}
\begin{problem}\label{problem:3}[\textbf{Output smoothing}]
Given $N_u$ ordered samples of the input process $u_t$, collected at times
$t^u_k$, and $N_y$ ordered samples of the output process $y_t$, collected at
times $t^y_k$, estimate the sample $v_k$, $k\in \N,\, k\leq N$.
\end{problem}

By adopting a kernel-based approach, we introduce a nonparametric model for
the impulse response that allows us to solve the three proposed problems
with a single algorithm based on a marginal likelihood approach. We will first
see how to solve Problem~\ref{problem:1}, using an EB approach.

\begin{remark}
  If $t^u_k=t^y_k=k$ for $k = 1,\dots,N$, then $\tilde u = u$ and $\tilde y =
  y$, and  Problem~\ref{problem:1} corresponds to the standard dynamic
  EIV setup (see~\cite{soederstroem2007errors} for a survey).
\end{remark}

\section{Kernel-based linear system identification}\label{sec:sysid}
We first focus on the problem of identifying $g$. For a given $w_t$,
Problem~\ref{problem:1} becomes a linear regression problem: collecting
$w_t$ into the column vector $w$, we can construct the $N\times n$
Toeplitz matrix $W = \T_{N\times n}(w)$; using this matrix we can write the
convolution~\eqref{eq:convolution} as the matrix product
\begin{equation}
  v = Wg,
\end{equation}
and we can formulate the regression problem in the available output data:
\begin{equation}\label{eq:regression}
  \tilde y = \P_y Wg + \tilde \varepsilon.
\end{equation}
In this equation $\tilde \varepsilon = \P_y\varepsilon$ are the samples of the
noise that correspond to the available samples of the output. From the
semi-orthogonality of $\P_y$, we have that
\begin{equation}
  \E \tilde \varepsilon =  0, \quad \E{\tilde \varepsilon \tilde \varepsilon^T}
  =
  \sigma_y^2 I_{N_y} \,.
\end{equation}
Adopting a kernel-based approach~\cite{pillonetto2014kernel}, we model $g$ as a
Gaussian random vector, with covariance matrix given by a kernel function
suitable for linear system identification. In particular, we use the
\emph{first-order stable-spline kernel}~\cite{pillonetto2010new}, so that
\begin{equation}\label{eq:prior}
  g \sim \mathcal{N}(0,\lambda K_\beta), \quad {\big[K_\beta\big]}_{i,j} := \beta^{ \max(i,j)} \,.
\end{equation}
The quantity $\lambda>0$ is a scaling factor, while $\beta\in(0,\,1)$ is a
shaping parameter that regulates the exponential decay of the realizations from~\eqref{eq:prior}. These two parameters are usually referred to as
\emph{hyperparameters}.

By postulating~\eqref{eq:prior}  a Gaussian prior for the impulse response, we can
derive the joint distribution of the measurements $\tilde y$ and $g$ as
\begin{equation}\label{eq:joint}
  \rmatrix{\tilde y\\g}\sim \mathcal N \left( \rmatrix{0\\0},\,
  \rmatrix {\Sigma_{\tilde y} & \Sigma_{g\tilde y}^T \\ \Sigma_{g\tilde y} & \lambda K_\beta} \right)
\end{equation}
where
\begin{equation}
  \begin{split}
    \Sigma_{\tilde y} &= \lambda \P_y W K_\beta W^T \P_y^T + \sigma_y^2 I_{\tilde N_y}, \\
    \Sigma_{g\tilde y} & = \lambda K_\beta \P_y^T.
  \end{split}
\end{equation}
From~\eqref{eq:joint} we can calculate the posterior distribution of the unknown
impulse response parameters given the available data $\tilde y$ (See, e.g.,~\cite[App. B.7]{soederstroem1981identification}):
\begin{equation}\label{eq:posterior}
g|\tilde y\sim \mathcal N (m, P)
\end{equation}
where
\begin{equation}
  \begin{split}\label{eq:posterior_pars}
    P &= {\left( \frac{1}{\sigma_y^2}W^T\P_y^T\P_y W + {(\lambda
K_\beta)}^{-1}\right)}^{-1}\\
    m &= \frac{1}{\sigma_y^2}P W^T\P_y^T \tilde y.
  \end{split}
\end{equation}
With the posterior distribution, we find the (Bayesian) minimum variance
estimate of $g$ as the posterior mean $m$.
From~\eqref{eq:posterior_pars}, we see that the posterior mean depends on the quantities
$\lambda$, $\beta$ and $w$, as well as on the output noise covariance
$\sigma_y^2$. All these parameters are unknown and need to be estimated from the data. Using an
EB approach, we estimate the parameters by replacing them with their maximum
marginal likelihood
estimates $\hat\lambda$, $\hat\beta$ and $\hat w$ (and $\hat \sigma_y^2$, but
this needs a special treatment: see Section~\ref{sec:identifiability}). In the
next section we focus on the problem of finding the maximizers
of the marginal likelihood. Solving the marginal likelihood problem is also the key to solve
Problem~\ref{problem:2} and Problem~\ref{problem:3}.

\section{Kernel-based input and output smoothing}\label{sec:ml}
\subsection*{Input smoothing and hyperparameter selection}
Consider the measurement model~\eqref{eq:measurements}. We can write it as a
regression in the smoothed input $w_t$ by observing that
\begin{equation}
  v = Wg = Gw
\end{equation}
where $G = \T_{N\times N}(g)$. Considering also the unavailability of some
data, we obtain the linear regression model in the available measurements
\begin{equation}
  \rmatrix{\tilde y \\ \tilde u }= \rmatrix{\P_y G\\ \P_u }  w +
  \rmatrix{\tilde \varepsilon\\ \tilde \eta}
\end{equation}
where $\tilde \eta = \P_u\eta$ is white noise of variance $\sigma_u^2$.
Under the Bayesian prior assumption~\eqref{eq:prior}, the available observation
vector and the impulse response parameter are jointly Gaussian, with a
log-likelihood given by
\begin{equation}\label{eq:complete_likelihood}
  \begin{split}
  L(\tilde y,\tilde u,g;w,\lambda,\beta) &= \\
  \log p(\tilde y|g;w) + \log p&(\tilde u;w) + \log p(g;\lambda,\beta) \,,
  \end{split}
\end{equation}
where
\begin{equation}
  \begin{split}
    & \log p(\tilde y|g;w) \cong -\frac{1}{2\sigma_y^2}\|\tilde y -\P_y Gw\|^2\\
    & \log p(\tilde u|w) \cong -\frac{1}{2\sigma_u^2}\|\tilde u - \P_u w\|^2\\
    & \log p(g;\lambda,\beta) =  -\frac{1}{2}g^T{(\lambda K_\beta)}^{-1}g -
    \frac{1}{2}\log\det\left( \lambda K_\beta \right)
  \end{split}
\end{equation}
Since $g$ is not available, we can interpret it as a \emph{latent variable},
which we estimate using the expectation maximization (EM) method.  The term
``EM method'' refers to a class of algorithms used to solve maximum likelihood
problems with latent variables.  In these methods, an iterative algorithm is
built by alternating between estimating the likelihood, and
updating the likelihood parameters using the estimated likelihood.

The estimated likelihood is created in the \emph{expectation step} by taking the conditional
expectation of the log-likelihood with respect to the posterior distribution of
the latent variables given the available data, for some estimate of the parameters:
\begin{equation}\label{eq:Q}
  Q\big(\theta;\theta^{(k)}\big) = \E\big[ L(\tilde y,\tilde u,g;\theta) \big]
\end{equation}
where $\theta = [w^T,\,\lambda,\,\beta]$ are the parameters to be estimated, and
the expectation is taken with respect of the distribution~\eqref{eq:posterior},
with the vector $\theta$ replaced with an estimate $\theta^{(k)}$. By
construction this function
is such that
\begin{equation}
  L\big(\tilde y,\tilde u;\theta\big) -
  L\big(\tilde y,\tilde u;\theta^{(k)}\big) \geq
  Q\big(\theta;\theta^{(k)}) - Q\big(\theta;\theta^{(k)}\big).
\end{equation}
where $L\big(\tilde y,\tilde u;\theta\big)$ is the marginal likelihood of the available data.
In the subsequent \emph{maximization step}, the parameter update is chosen as
the maximum of $Q(\theta;\theta^{(k)})$, so that
\begin{equation}
  Q\big(\theta^{(k+1)};\theta^{(k)}\big) - Q\big(\theta;\theta^{(k)}\big)> 0
\end{equation}
and consequently the marginal likelihood, in the updated parameters, is increased as well.
By iterating the expectation and maximization steps, from any initialization of
the parameters, we obtain a sequence of estimates of the parameters that
converge to a local maximizer of the marginal likelihood of the available data (for a
complete look on the EM algorithm~\cite{mclachlan2007algorithm}).

In the case at hand, we can compute the expectation of the joint log-likelihood~\eqref{eq:complete_likelihood} in closed form:
\begin{equation*}
  \begin{split}
    & \E\log p(\tilde y|g;w) \cong \frac{1}{\sigma_y^2}\left(\tilde y^T \P_y
    M^{(k)}w - \frac{1}{2}w^T A^{(k)} w\right),\\ & \E\log p(\tilde u|w) \cong
    \frac{1}{\sigma_u^2}\left(\tilde u^T\P_u w  - \frac{1}{2}w^T\P_u^T\P_u
    w\right),\\ & \E\log p(g;\lambda,\beta) =
      -\frac{1}{2}\trace\Big\{{\big(\lambda
    K_\beta\big)}^{-1}(P^{(k)}+m^{(k)}m^{(k)\,T})\Big\} \\ &\hspace{8em}-
\frac{1}{2}\log\det\left( \lambda K_\beta \right).
\end{split}
\end{equation*}
All the expectations are taken with respect to the posterior
distribution~\eqref{eq:posterior}, with $\lambda$, $\beta$, and $w$ replaced by
their estimates $\lambda^{(k)}$, $\beta^{(k)}$ and $w^{(k)}$. In these
expressions $m^{(k)}$ and  $P^{(k)}$ are the posterior mean and covariance as
expressed in~\eqref{eq:posterior_pars}, with the unknown parameters replaced by
their estimates; $M^{(k)}=T_{N\times N}(m^{(k)})$ is the Toeplitz matrix of the
posterior mean and $A^{(k)}$ is the posterior second moment of the matrix
$G^T\P_y^T=T_{N\times N}{(g)}^T \,\P_y^T$, that is
\begin{equation}
  \begin{split}
    A^{(k)} &= \E\big\{G^T\P_y^T\P_y G \big\}\\ &=
  \devec^T\Big[(P^{(k)} + m^{(k)}m^{(k)\,T}\big) \otimes \P_y^T\P_y\Big] \devec
\,.  \end{split}
\end{equation}
The $Nn\times N$ matrix $\devec$ is defined as
\begin{equation}
  \devec^T = \rmatrix{I_N & S & S^2 & \cdots & S^n}
\end{equation}
where $S$ is the $N\times N$ upward shift operator
\begin{equation}
  \label{eq:shift_op} {\big[S\big]}_{i,j} = \delta_{i,j-1}.
\end{equation}

To find the updated parameter values, we maximize the conditional expectations
with respect to the parameters:
\begin{align}
    & w^{(k+1)} = \arg\max_w\, \E\log p(\tilde y|g;w)+ \E\log p(\tilde u
|w)\,,\label{eq:update_w} \\
    &\lambda^{(k+1)}, \beta^{(k+1)} = \arg\max_{\lambda,\beta}\,\E\log
    p(g;\lambda,\beta)\,. \label{eq:update_hypers}
\end{align}

The cost function in~\eqref{eq:update_w} is quadratic in the decision variable,
so the maximum is available in closed form as
\begin{equation*}
  w^{(k+1)} ={\left(A^{(k)}+\gamma\,
    \P_u^T\P_u  \right)}^{-1}  \left(
    M^{(k)\,T}\P_y^T\tilde y +
  \gamma\, \P_u^T\tilde u \right),
\end{equation*}
where $\gamma = \sigma_y^2/\sigma_u^2$.
The optimization~\eqref{eq:update_hypers} can be solved in closed form with
respect to $\lambda$:
\begin{equation}
  \lambda^*(\beta) = \frac{1}{n}\trace\Big\{K_\beta^{-1}
  \big(P^{(k)}+m^{(k)}m^{(k)\,T}\big)\Big\}.
\end{equation}
With this, the update of $\beta$ is given by
\begin{equation}
  \beta^{(k+1)} =\arg\max_{\beta\in(0,\,1)}\, n\log\big( \lambda^*(\beta) \big) +
  \log\det K_\beta\, ,
\end{equation}
which can be solved with scalar optimization methods, or grid search. Once
we have $\beta^{(k+1)}$, we also have the update for~$\lambda$:
\begin{equation}
  \lambda^{(k+1)} = \lambda^*\big(\beta^{(k+1)}\big).
\end{equation}

Appealing to theory of the EM-method, we have the following result:
\begin{theorem}\label{thm:1}
  The sequences $\{w^{(k)}\}$, $\{\lambda^{(k)}\}$, and $\{\beta^{(k)}\}$
  generated by the iterations~\eqref{eq:update_w} and~\eqref{eq:update_hypers}
  are such that:
  \begin{equation}\label{eq:proof1}
    L(\tilde y,\tilde u;w^{(k+1)},\lambda^{(k+1)},\beta^{(k+1)}) >
    L(\tilde y,\tilde u;w^{(k)},\lambda^{(k)},\beta^{(k)}).
  \end{equation}
  where $L(\tilde y,\tilde u;w,\lambda,\beta)$ is the marginal likelihood of
  the data; and
  \begin{equation} \label{eq:proof2}
    L(\tilde y,\tilde u;w^{(k)},\lambda^{(k)},\beta^{(k)}) \to
    L^*
  \end{equation}
  as $k\to \infty$,  where $L^*$ is a local extremum of $L(\tilde y
  ,\tilde u; w,\lambda,\beta)$.
\end{theorem}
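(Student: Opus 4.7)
The plan is to reduce Theorem~\ref{thm:1} to the standard convergence theory of the EM algorithm, essentially as foreshadowed in the text preceding the statement. The ascent property~\eqref{eq:proof1} follows from the classical monotonicity of EM, while the convergence~\eqref{eq:proof2} follows from boundedness of the marginal likelihood combined with a Wu-type argument.

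First I would establish monotonicity via the decomposition
\[
L(\tilde y,\tilde u;\theta) = Q(\theta;\theta^{(k)}) - H(\theta;\theta^{(k)}),
\]
with $H(\theta;\theta^{(k)}) := \E\bigl[\log p(g\mid \tilde y,\tilde u;\theta)\bigr]$, the expectation being taken under the posterior at $\theta^{(k)}$. Non-negativity of the Kullback--Leibler divergence between the posteriors at $\theta$ and at $\theta^{(k)}$ (i.e.\ Jensen's inequality) gives $H(\theta;\theta^{(k)})\le H(\theta^{(k)};\theta^{(k)})$, hence
\[
L(\tilde y,\tilde u;\theta) - L(\tilde y,\tilde u;\theta^{(k)}) \ \ge\ Q(\theta;\theta^{(k)}) - Q(\theta^{(k)};\theta^{(k)}).
\]
Applying this at $\theta=\theta^{(k+1)}$, where by construction $\theta^{(k+1)}$ maximizes $Q(\cdot;\theta^{(k)})$ through the partial updates~\eqref{eq:update_w}--\eqref{eq:update_hypers}, yields the ascent inequality. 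Strictness at any $\theta^{(k)}$ that is not a stationary point of $L$ comes from the fact that the $w$-subproblem is strongly concave (its Hessian equals $\sigma_y^{-2}A^{(k)} + \sigma_u^{-2}\P_u^T\P_u$, and $A^{(k)}$ is positive definite because $P^{(k)}\succ 0$), and the profiled objective for $\beta$ is smooth on $(0,1)$, so the M-step delivers a strict improvement whenever a nonzero gradient is present.

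For~\eqref{eq:proof2}, I would observe that $L(\tilde y,\tilde u;\theta)$ is the log of a marginal Gaussian density and is therefore continuous and bounded above on compact parameter sets. The sequence $\{L(\theta^{(k)})\}$ is monotone non-decreasing by the previous step and bounded above, hence it converges to some $L^\star$. To identify $L^\star$ with a local extremum I would invoke the Wu--type global convergence theorem for EM (see, e.g.,~\cite{mclachlan2007algorithm}): provided the M-step map is continuous and the iterates stay in a compact set, every limit point of $\{\theta^{(k)}\}$ is a stationary point of $L$, and $L(\theta^{(k)})$ converges to the common likelihood value at these points. Continuity of the M-step in our setting is immediate: $w^{(k+1)}$ is a rational function of the entries of $m^{(k)}$ and $P^{(k)}$, which themselves depend smoothly on $\theta^{(k)}$ through~\eqref{eq:posterior_pars}, and the profile likelihood for $\beta$ is continuous on $(0,1)$.

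The main obstacle is keeping the iterates in a well-posed region: $\beta$ lives in the open interval $(0,1)$, and $K_\beta$ degenerates at the endpoints, while $\lambda$ may in principle drift to $0$ or $\infty$. I would handle this as is customary in the stable-spline literature, restricting the $\beta$-search in~\eqref{eq:update_hypers} to a compact sub-interval $[\beta_{\min},\beta_{\max}]\subset(0,1)$ and, if necessary, confining $\lambda$ to a compact range. This guarantees the compactness and continuity hypotheses of Wu's theorem, ensuring that $L^\star$ is genuinely a local extremum and that the strict inequality in~\eqref{eq:proof1} holds at every non-stationary iterate, in accordance with the standard EM fixed-point characterization.
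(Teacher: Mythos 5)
Your proposal is correct and takes essentially the same route as the paper: the paper's proof simply notes that~\eqref{eq:update_w}--\eqref{eq:update_hypers} are EM iterations with a $Q$-function continuous in both arguments and invokes Theorem~1 of~\cite{wu1983convergence}, which is precisely the reduction you carry out (you additionally spell out the standard $L = Q - H$ monotonicity argument and the compactness/boundary caveats for $\beta$ and $\lambda$ that the paper leaves implicit). One small overclaim worth fixing: $A^{(k)}$ is not positive definite merely because $P^{(k)}\succ 0$ --- one needs $\P_y \T_{N\times n}(\alpha)\neq 0$ for every $\alpha\neq 0$, and what the $w$-update actually requires is invertibility of $A^{(k)}+\gamma\,\P_u^T\P_u$, which can fail exactly in the non-identifiable configurations of Proposition~\ref{prop:identifiability}.
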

\begin{proof}
  By construction, the iterations~\eqref{eq:update_w} and~\eqref{eq:update_hypers} are iterations in an EM algorithm. The E-step function~\eqref{eq:Q}, seen as the function of two variables $Q(x;y)$ is continuous in
  $x$ and $y$. So the sequence generated satisfy the conditions in Theorem~1 in~\cite{wu1983convergence} and~\eqref{eq:proof1} and~\eqref{eq:proof2} follow.
\end{proof}
Interestingly, except for pathological cases, the EM-method is guaranteed to
converge to a local maximum of the marginal likelihood (see~\cite[Ch. 3]{mclachlan2007algorithm} for details).
\begin{corollary}\label{cor:1}
  If the sequences $\{w^{(k)}\}$, $\{\lambda^{(k)}\}$, and
  $\{\beta^{(k)}\}$
  generated by the iterations~\eqref{eq:update_w} and~\eqref{eq:update_hypers}
  are such that:
  \begin{equation*}
    \|w^{(k+1)} - w^{(k)}\|^2_2+\|\beta^{(k+1)} - \beta^{(k)}\|^2_2 +
    \|\lambda^{(k+1)} - \lambda^{(k)}\|^2_2 \to 0,
  \end{equation*}
  as $k\to \infty$, then they converge to a stationary point of $L(\tilde y,
  \tilde u; w ,\lambda,\beta)$.
\end{corollary}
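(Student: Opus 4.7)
My plan is to combine Theorem~\ref{thm:1} with a standard fixed-point argument for EM algorithms along the lines of Theorem~2 in~\cite{wu1983convergence}. The key idea is that any accumulation point of an EM sequence whose successive differences vanish must be a fixed point of the parameter-update map, and fixed points of the EM map coincide with stationary points of the marginal likelihood $L(\tilde y,\tilde u;w,\lambda,\beta)$.

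First I would observe that the updates~\eqref{eq:update_w} and~\eqref{eq:update_hypers} jointly define a mapping $\theta^{(k+1)}=M(\theta^{(k)})$ that is continuous on the open parameter domain $\R^{N}\times(0,\infty)\times(0,1)$: the closed-form formula for $w^{(k+1)}$ depends continuously on $m^{(k)}$, $P^{(k)}$, and $A^{(k)}$; the update for $\beta$ is the maximizer of a smooth scalar objective on $(0,1)$; and $\lambda^{(k+1)}=\lambda^{\star}(\beta^{(k+1)})$ depends continuously on those quantities. Each of $m^{(k)}$, $P^{(k)}$, $A^{(k)}$ in turn depends continuously on $\theta^{(k)}$ via~\eqref{eq:posterior_pars}.

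Next, the hypothesis $\|\theta^{(k+1)}-\theta^{(k)}\|\to 0$ implies that every subsequential limit $\theta^{\star}$ of $\{\theta^{(k)}\}$ satisfies $\theta^{\star}=M(\theta^{\star})$ by continuity of $M$. Equivalently, $\theta^{\star}$ is a critical point of $\theta\mapsto Q(\theta;\theta^{\star})$. The standard EM gradient identity
\begin{equation*}
  \nabla_{\theta} L(\tilde y,\tilde u;\theta)\big|_{\theta=\theta^{\star}}
  = \nabla_{\theta} Q(\theta;\theta^{\star})\big|_{\theta=\theta^{\star}},
\end{equation*}
see e.g.~\cite[Ch.~3]{mclachlan2007algorithm}, then yields $\nabla L(\tilde y,\tilde u;\theta^{\star})=0$, so $\theta^{\star}$ is a stationary point of the marginal likelihood. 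Combined with the monotone convergence $L(\tilde y,\tilde u;\theta^{(k)})\to L^{\star}$ from Theorem~\ref{thm:1}, the successive-difference hypothesis upgrades subsequential convergence to convergence of the whole sequence, because the accumulation set is contained in the (typically discrete) level set $\{\theta:L(\tilde y,\tilde u;\theta)=L^{\star},\;\nabla L=0\}$.

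The main obstacle is verifying that the iterates remain in a compact region so that accumulation points actually exist. Concretely, one must rule out escape to the boundary: $\beta^{(k)}\to 0$ or $\beta^{(k)}\to 1$ degenerates $K_\beta$, $\lambda^{(k)}\to 0$ or $\lambda^{(k)}\to\infty$ breaks the prior normalization, and $w^{(k)}$ must remain bounded. The standard remedy is to note that $L(\tilde y,\tilde u;\theta^{(k)})$ is monotonically increasing and hence bounded below along the sequence, while $L$ diverges to $-\infty$ at the boundary of the admissible parameter set; thus the iterates stay in a compact sublevel set, which together with the fixed-point argument above gives the claimed convergence to a stationary point.
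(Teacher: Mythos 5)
The paper's entire proof is a one-line citation: ``Follows directly from Theorem~6 in~\cite{wu1983convergence}.'' What you have written is, in effect, a reconstruction of the proof of that theorem (continuity of the EM map $M$, the fixed-point argument for subsequential limits, the identity $\nabla_\theta L|_{\theta^\star}=\nabla_\theta Q(\cdot;\theta^\star)|_{\theta^\star}$, and the connectedness-of-the-limit-set argument to upgrade to convergence of the whole sequence). So the mathematical substance is the same; your version is more informative but also takes on the burden of verifying Wu's regularity conditions, which the citation silently absorbs. Two of those verifications are where your argument is soft. First, continuity and single-valuedness of $M$ require the $\beta$-update $\arg\max_{\beta\in(0,1)}\, n\log\lambda^{*}(\beta)+\log\det K_\beta$ to have a unique maximizer depending continuously on $\theta^{(k)}$; this is asserted, not shown, and the objective is not obviously unimodal.

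Second, and more concretely, your compactness step relies on the claim that $L$ diverges to $-\infty$ on the boundary of the admissible set, and that is false in some directions: as $\lambda\to 0^{+}$ (or $\beta\to 0^{+}$) the prior collapses onto $g=0$, the marginal covariance of $\tilde y$ tends to $\sigma_y^2 I$, and $L$ tends to a \emph{finite} value, so the sublevel set $\{\theta: L(\theta)\geq L(\theta^{(0)})\}$ need not be compact in the open domain $\R^N\times(0,\infty)\times(0,1)$. The iterates could in principle drift toward $\lambda^{(k)}\to 0$ with vanishing successive differences, in which case no interior accumulation point exists and the conclusion fails as you have derived it. This compactness hypothesis is precisely condition (6) of \cite{wu1983convergence}, which the paper also does not verify; so your proof shares the paper's gap, but by choosing to argue from first principles rather than cite, you have made the gap your own and the justification you offer for closing it does not work. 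Either restrict the claim to the case where the iterates admit an interior accumulation point, or supply an argument that bounds $\lambda^{(k)}$ and $\beta^{(k)}$ away from the boundary along the sequence.
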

\begin{proof}
  Follows directly from Theorem~6 in~\cite{wu1983convergence}.
\end{proof}
\begin{remark}
  Theorem~\ref{thm:1} gives a natural stopping criterion for the EM algorithm.
  When the increase in the likelihood between two iterates is below a certain
  threshold, (approximate) convergence to a maximimum is safely guaranteed.
  Corollary~\ref{cor:1} further guarantees the convergence of the parameters to
  a local maximizer when the change between iterations is infinitesimal.
\end{remark}
\subsection{Kernel-based output smoothing}\label{sec:output}
To solve Problem~\ref{problem:3} we first observe that, since the output noise
is white, the output smoothing problem is a simulation problem, and the
smoothed output signal is given by the convolution $Wg$.
After solving Problem~\ref{problem:1} and Problem~\ref{problem:2}, we can find
an estimate of the smoothed output signal $\hat v$ by plugging in the
estimates $\hat w$ and $\hat g$ in the convolution, obtaining
\begin{equation}
  \hat v= \hat W\hat g.
\end{equation}

\section{Some remarks on identifiability}\label{sec:identifiability}
It is well known (see, e.g.~\cite{bottegal2011identifiability},~\cite{soederstroem2002perspectives}) that, in general,
errors-in-variables problems are not identifiable. Different
models may explain the same observed data and therefore it is impossible to
assess the validity of a certain model from the data. In the case of Gaussian
noise, where only
second moments carry information about the distributions, any attempt to
identify the noise variances, the system, and the input samples is bound to
fail.  In our EM framework, this follows from the shape of the likelihood~\eqref{eq:complete_likelihood}: for instance leaving free both $\sigma_u^2$ and
$w$,  we can choose $w=\tilde u$ and $L(\tilde{y},\tilde{u},g;w,\lambda,\beta)$
can be made arbitrarily large by choosing a small enough $\sigma_u^2$. Various
additional assumptions can be posed to circumvent the non-identifiability
issue, see~\cite{soederstroem2007errors}. In our setup, if we know the ratio
$\gamma = \sigma_y^2/\sigma_u^2$ we can estimate the unknown variances by
adding the following equation to the iterations of the EM method: \begin{align}
  \sigma_y^{2\,(k+1)} &= \frac{\tilde y^T\tilde y - 2\tilde y^T
  \P_y M^{(k)}w^{(k+1)} + w^{(k+1)\,T}A^{(k)}w^{(k+1)}}{N_u+N_y}\nonumber\\ &
  + \gamma\frac{\tilde u^T\tilde u - 2\tilde u^T\P_u w^{(k+1)} +
  w^{(k+1)\,T}\P_u^T\P_u w^{(k+1)}}{N_u+N_y},\nonumber\\
  \sigma_u^{2\,(k+1)} &=
\sigma_y^{2\,(k+1)}/\gamma.\label{eq:noisecov} \end{align}

In the case of missing data we have other identifiability problems, in
addition to the ones inherited from errors-in-variables. The possibility of
multiple models explaining the available data is linked to aliasing, as the
missing data can be seen as data decimation~\cite{wallin2002multiple}. In order
to have a unique solution to the likelihood problem
\begin{equation}\label{eq:minimize_w}
  \underset{w,\,\lambda,\,\beta}{\text{maximize}}\; \log p(\tilde y|g;w) + \log
  p(\tilde u;w) + \log p(g;\lambda,\beta), \end{equation} where $g$ is the true
  impulse response, we need that the symmetric matrix
  \begin{equation}\label{eq:A}
  \frac{1}{\sigma^2_y}G^T\P_y^T\P_y G\!+\!\frac{1}{\sigma_u^2} \P_u^T\P_u
\end{equation} is invertible. This is the case as long as the effect of every
missing input sample is visible at least once in the output:
\begin{proposition}\label{prop:identifiability}
  There is a unique solution to~\eqref{eq:minimize_w} if and
  only if for every missing input sample time $\tau_u \notin \{t^u_i\}$, there
  is a $k \in \{0\dots n\}$ and a $\tau_y \in \{t^y_i\}$ such that $g_k\neq 0$
  and $k+\tau^u = \tau_y$.  \end{proposition}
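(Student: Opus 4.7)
The plan is to reduce uniqueness of the maximizer in~\eqref{eq:minimize_w} (for fixed, true $g$) to positive definiteness of the matrix in~\eqref{eq:A}, and then to characterize positive definiteness combinatorially in terms of the missing-input / available-output structure. For fixed $g$, $\sigma_y^2$ and $\sigma_u^2$, the $w$-dependent part of the objective is the quadratic $-\tfrac{1}{2}w^T M w + (\text{linear in }w)$, with
\begin{equation*}
  M=\tfrac{1}{\sigma_y^2}G^T\P_y^T\P_y G+\tfrac{1}{\sigma_u^2}\P_u^T\P_u,
\end{equation*}
while the $(\lambda,\beta)$-terms decouple from $w$, so the optimization has a unique solution if and only if $M$ is positive definite. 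Since $M$ is a sum of two positive semidefinite matrices, positive definiteness is equivalent to $\ker(\P_y G)\cap\ker(\P_u)=\{0\}$.

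Next I would translate this kernel condition into a statement about columns. A vector lies in $\ker(\P_u)$ iff it is supported on the set $\mathcal M_u=\{1,\dots,N\}\setminus\{t^u_k\}$ of missing input times. Letting $\Pi$ be the column-selection matrix picking out $\mathcal M_u$, the condition $\ker(\P_y G)\cap\ker(\P_u)=\{0\}$ is equivalent to $\P_y G\Pi$ having full column rank. By construction, the entry of $\P_y G\Pi$ in row $\tau_y\in\{t^y_k\}$ and column $\tau_u\in\mathcal M_u$ is exactly $g_{\tau_y-\tau_u}$ (with the convention $g_k=0$ outside $\{1,\dots,n\}$), so every structural property of $M$ is governed by which differences $\tau_y-\tau_u$ land on a nonzero tap of $g$.

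The necessity direction then follows by contrapositive: if some missing time $\tau_u^\star$ fails the stated condition, then for every available $\tau_y$ either $\tau_y-\tau_u^\star\notin\{1,\dots,n\}$ or $g_{\tau_y-\tau_u^\star}=0$, so the column of $\P_y G\Pi$ indexed by $\tau_u^\star$ is identically zero. Equivalently, $e_{\tau_u^\star}$ is a nonzero element of $\ker(\P_y G)\cap\ker(\P_u)$, and $M$ is singular.

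The main obstacle is the sufficiency direction, which I would approach by induction on the ordered missing times $\tau_u^{(1)}<\tau_u^{(2)}<\dots$: given $w\in\ker(\P_y G)\cap\ker(\P_u)$, pick the smallest index $j$ with $w_{\tau_u^{(j)}}\neq 0$, use the hypothesis to choose an available $\tau_y=\tau_u^{(j)}+k$ with $g_k\neq 0$, and expand $(\P_y Gw)_{\tau_y}=\sum_{i=1}^n g_i w_{\tau_y-i}=0$, aiming to isolate $g_k w_{\tau_u^{(j)}}$ as the unique surviving term. The delicate point is that the sum may pick up contributions from other missing input times lying strictly between $\tau_u^{(j)}$ and $\tau_y$, and in general these do not vanish; closing the induction therefore requires either choosing the minimal such $k$ and exploiting the triangular structure of $\P_y G\Pi$, or invoking a Hall-type matching of missing inputs to distinct available outputs. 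I expect the crux of the write-up to be pinpointing precisely this refinement so that the expansion collapses to a single nonzero term.
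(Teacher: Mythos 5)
Your reduction to positive definiteness of $M=\tfrac{1}{\sigma_y^2}G^T\P_y^T\P_y G+\tfrac{1}{\sigma_u^2}\P_u^T\P_u$, the equivalence with $\ker(\P_y G)\cap\ker(\P_u)=\{0\}$, and the identification of the relevant object as the submatrix $\P_y G\Pi$ with entries $g_{\tau_y-\tau_u}$ are exactly the paper's route: the published proof writes $\alpha=\sum_{i\notin\{t^u_k\}}a_ie_i$ and observes that invertibility requires $\sum_i a_i\P_y S^i g\neq 0$ for all such $\alpha$. Your necessity argument (a failing $\tau_u^\star$ yields an identically zero column, hence $e_{\tau_u^\star}$ in the common kernel) is complete and matches what the paper's condition actually delivers.

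The gap you flag in the sufficiency direction is genuine, and you should know that the paper's own proof does not close it either: it stops at ``this concludes the proof'' after noting that the linear combination must be nonzero, which is a statement about \emph{linear independence} of the columns $\P_y S^{\tau_u}g$, whereas the stated combinatorial condition only guarantees that each column is individually nonzero. Your proposed induction cannot be closed in general, because the condition as stated is not sufficient. A concrete counterexample: take $N=3$, $n=2$, $g_1=1$, $g_2=-1$, missing input times $\{1,2\}$, and a single available output time $t^y=3$. Every missing input time satisfies the hypothesis ($\tau_u=1$ with $k=2$, $\tau_u=2$ with $k=1$), yet the two columns of $\P_y G\Pi$ are the scalars $g_2=-1$ and $g_1=1$, so $\alpha=e_1+e_2$ lies in $\ker(\P_y G)\cap\ker(\P_u)$ and $M$ is singular. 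A correct sufficient condition must ensure that the missing-input columns of $\P_y G$ are linearly independent --- e.g.\ via the triangular structure you mention, which works when each missing input time can be matched to an available output time not shared with any \emph{later} missing input contribution (your ``Hall-type matching'' idea is the right repair). So your write-up is not missing a step the paper supplies; it has correctly located a defect in the proposition itself.
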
 \begin{proof} Matrix~\eqref{eq:A} is
  invertible iff there is no $\alpha$ such that $\P_y G \alpha=0$ and
  $\P_u\alpha=0$. The condition $\P\alpha =0$ means that $\alpha$ can be
  written as $\sum_{i\notin \{t^u_k\}} \alpha_i e_i$ where $e_i$ are vectors in
  the canonical basis of $\R^N$ and $a_i$ are scalars. The condition
  $\P_y G\alpha\neq 0$ translates into $\sum_{i\notin \{t^u_k\}} a_i \P_y S^i g
  \neq 0$, where $S$ is defined in~\eqref{eq:shift_op}. This concludes the
  proof.  \end{proof}
g
\section{Simulations}\label{sec:simulations}
To evaluate the performance of the proposed method, we perform a set of Monte
Carlo (MC) simulations. In the MC simulations, we identify the impulse
responses of 500 systems from the dataset D1 described in~\cite{chen2012sparse}.
For each system in the dataset, we generate $N  = 210$ input and output
samples; the input is Gaussian white noise with variance equal to 1. The output
measurements are affected by Gaussian white noise of variance equal 0.1, namely
10\% of the noiseless output variance. The variance of the noise affecting the
input varies with the experiment.

We use the iterative method presented in Section~\ref{sec:ml} to estimate the
first $n=100$ samples of the impulse response. The noise variance is updated
iteratively with~\eqref{eq:noisecov}. The iterations are initialized at
$w^{(0)}=u$, $\lambda^{(0)}=10$, $\beta^{(0)}=0.6$. The noise variances
$\sigma_y^2$ and $\sigma_u^2$ are initialized, respectively, at the sample
variance
of the least squares residuals and at $\sigma_y^{2\,(0)}/\gamma$. The iterations
are stopped when the relative change of the parameter updates is below 1\%.

We evaluate the goodness of fit using the standard score
\begin{equation}
  \mathrm{fit}(a, a_\mathrm{ref}) = 1- \frac{\| a -
  a_\mathrm{ref}\|_2}{\|a_\mathrm{ref} - \mathrm{mean}(a_\mathrm{ref})\|_2},
\end{equation}
where $a_\mathrm{ref}$ is a true value and $a$ its estimate. We calculate
the median fit of the estimated impulse responses, inputs, and outputs over the
dataset.

We consider two different scenarios. In the first scenario, we corrupt the
dataset with increasing fractions of missing samples. In the second scenario, we
corrupt the dataset with input noises of increasing variance.
Table~\ref{tab:summary} gives a summary of the experimental conditions.

\begin{table}[b]
  \centering
  \begin{tabular}{ccccc}
    \toprule
    & $\sigma_y^2$ & $\sigma_u^2$ & missing input  & missing output\\
    \midrule
    A (Exp. 1) & 10\% & 10\% &  $0\%\div50\%$ & $0\%$\\
    A (Exp. 2) & 10\% & 10\% & $0\%$ & $0\%\div50\%$ \\
    A (Exp. 3) & 10\% & 10\% & $0\%\div25\%$ & $0\%\div25\%$\\
    B & 10\% & $0\%\div100\%$ & $0\%$ & $0\%$\\
    \bottomrule
  \end{tabular}
  \caption{Experimental conditions in the simulation scenarios}\label{tab:summary}
\end{table}

\subsection*{Scenario A\@: Missing data}
The input noise is Gaussian white noise with variance 0.1 (10\% of the input
signal variance). Before performing
the identification, we randomly select and remove a fraction of the available
data: in Exp. 1, we remove from 0\% to 50\% of the input samples, in 10\%
increments; in Exp. 2, we remove from 0\% to 50\% of the output samples,
in 10\% increments; in Exp. 3, we remove equal fractions of input and output,
between 0\% and 25\%, in 5\% increments.  The results are plotted in
Figure~\ref{fig:scenario_a}. Interestingly, a large fraction of missing input
samples has severe effect on the performance, whereas a large fraction of
missing output samples has a milder effect on the identification performance. In
Exp. 1 and Exp. 2, the model has always resulted identifiable, whereas in Exp. 3
a number of systems were non-identifiable. The results are collected in
Table~\ref{tab:unsolve}.
\begin{figure}[H]
  \centering
  \includegraphics[width=\columnwidth]{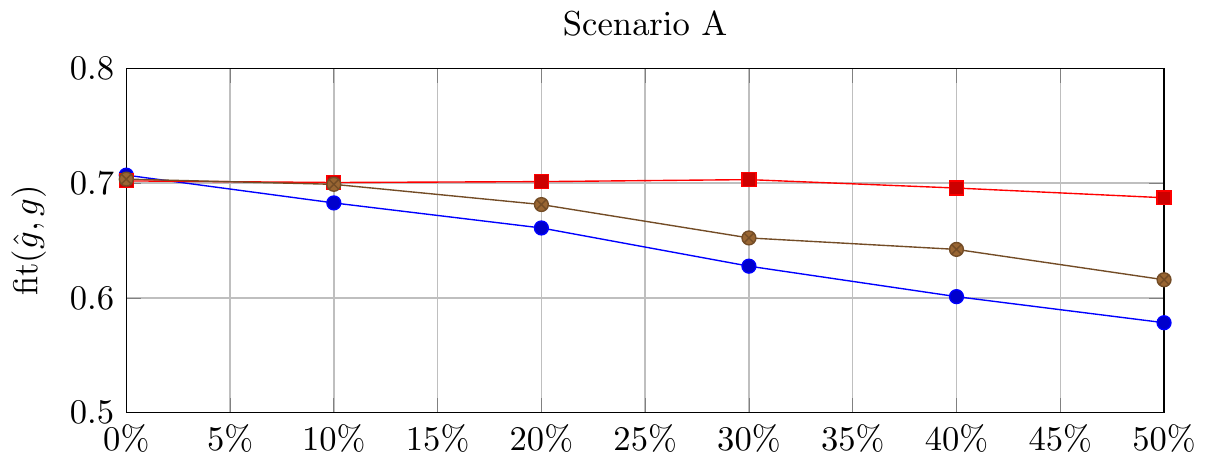}
  \includegraphics[width=\columnwidth]{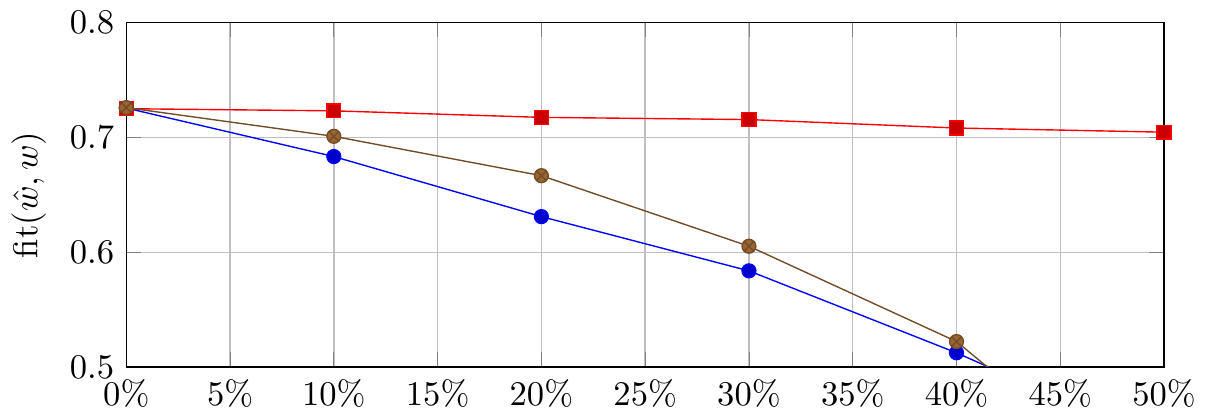}
  \includegraphics[width=\columnwidth]{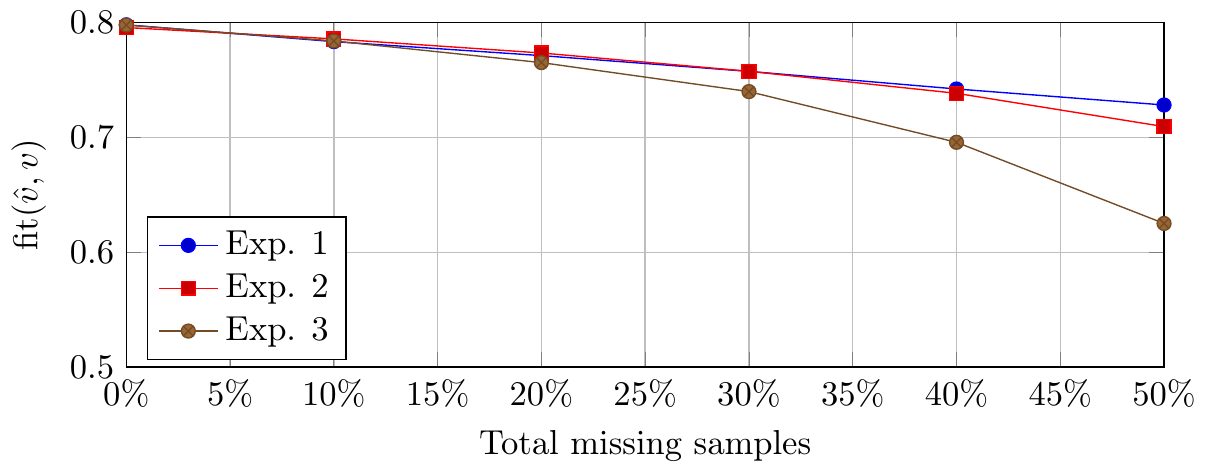}
  \caption{Plot of the median fit of the impulse response (top), the smoothed
    input (middle) and the smoothed output (bottom) over 500 MC runs, for increasing
    fractions of missing samples; In Exp.1  we remove input samples, in
    Exp.2  we remove output samples, in Exp.3  we remove input and
    output samples.
}\label{fig:scenario_a}
\end{figure}

\begin{table}[b]
  \centering
  \begin{tabular}{ccccccc}
    \toprule
    Unsolvable problems&     0  &   3  &  5 &  16 &  27 &  40\\
    \midrule
    Total missing samples & 0\% & 10\% & 20\% & 30\% & 40\% & 50\%\\
    \bottomrule
  \end{tabular}
  \caption{Number of non-identifiable systems in Exp. 3 (out of 500)}\label{tab:unsolve}
\end{table}

\subsection*{Scenario B\@: Input noise}
The input noise is Gaussian white noise. We consider values of the input noise
variance between 0 (no noise) and 1 (same variance as the input), in increments
of 0.2. The results are plotted in Figure~\ref{fig:scenario_b}. In this
scenario, we compare the performance of the proposed method with a kernel-based
identification method that does not account for input noise. We estimate the
impulse response using the posterior mean $m$ from~\eqref{eq:posterior_pars},
with $\lambda$, $\beta$ and $\sigma_y^2$ estimated trough marginal likelihood,
with all instances of $w$ replaced by the noisy measurements $u$.
\begin{figure}[htb]
  \centering
  \includegraphics[width=\columnwidth]{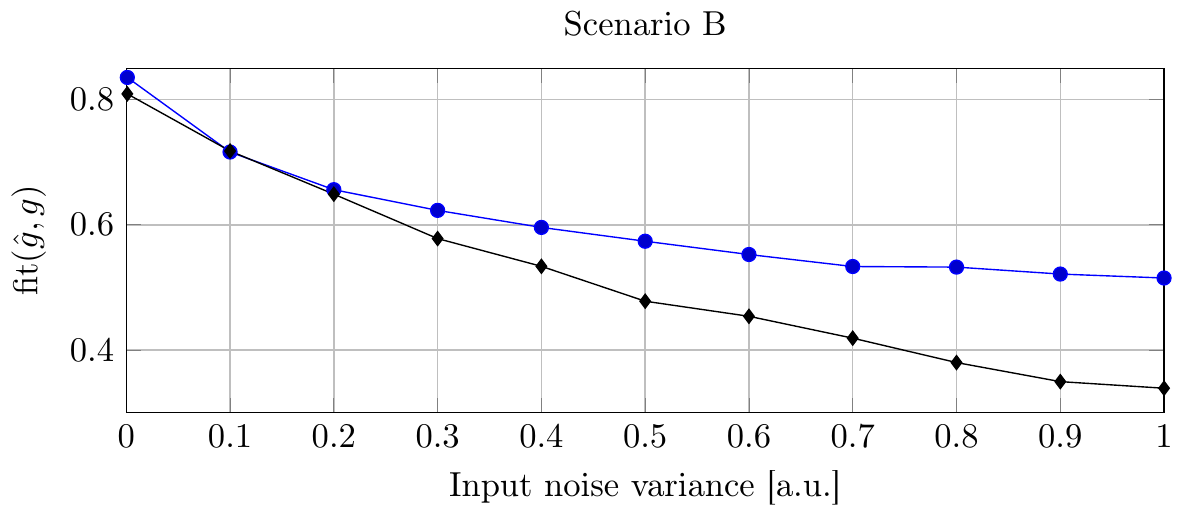}
  \caption{Plot of the median fit of the impulse response estimate over 500 MC
    runs, for increasing values of the input noise variance. We compare the
    proposed estimator (blue) with performance of an estimator that does not
  account for input noise (black).}\label{fig:scenario_b}
\end{figure}

\section{Conclusions}\label{sec:conclusions}
In this paper we have presented a nonparametric kernel-based method for the
identification of
the impulse response of a linear systems in the presence of noisy and missing input-output
data. The method relies upon a Gaussian regression framework,
where the impulse response of the system is modeled as a Gaussian process with
a suitable covariance matrix. Using an EB approach, we find the minimum
mean-squared estimate of the impulse response. This estimate depends on the
unknown noiseless input, as well as on the kernel hyperparameters and the noise
variances. These quantities are estimated from the marginal likelihood of the data, obtained
integrating out the impulse response. We have devised an iterative scheme that
solves the marginal likelihood maximization in simple updates, and we have discussed
the convergence properties of the algorithm. We have tested the method on a data
bank of linear systems, where we have analyzed the degradation in performance
for increasing amounts of missing data, and increasing noise variance on the input
measurements. We have briefly addressed the question of identifiability; and
simulations seem to validate our theoretical results, however, this aspect
needs further study.

\bibliographystyle{ieeetr}
\bibliography{main.bbl}

\end{document}